\title{\Large\bf Robust Mean Square Stability of Open Quantum Stochastic Systems with Hamiltonian Perturbations in a Weyl Quantization Form}
\author{Arash Kh. Sichani, \qquad Igor G. Vladimirov, \qquad Ian R. Petersen
\thanks{This work is supported by the Australian Research Council. The authors are with UNSW Canberra, ACT 2600, Australia. E-mail: {\tt arash\_kho@hotmail.com, igor.g.vladimirov@gmail.com, i.r.petersen@gmail.com}.}
}
\newtheorem{lem}{Lemma}
\newtheorem{thm}{Theorem}
\newtheorem{exmp}{Example}
\def\<{\leqslant}           
\def\>{\geqslant}           
\def\d{\partial}
\def\wt{\widetilde}
\def\Re{\mathrm{Re}}   
\def\mA{\mathbb{A}}    
\def\mR{\mathbb{R}}    
\def\mC{\mathbb{C}}    
\def\Tr{\mathrm{Tr}}       
\def\rT{\mathrm{T}}        
\def\bE{\mathbf{E}}    
\def\[[[{[\![\![}
\def\]]]{]\!]\!]}
\def\bra{\langle }
\def\ket{\rangle }
\def\Bra{\left\langle }
\def\Ket{\right\rangle }
\def\re{\mathrm{e}}        
\def\rd{\mathrm{d}}        
\def\x{\times}
\def\ox{\otimes}
\def\cP{\mathcal{P}}
\def\cU{\mathcal{U}}
\def\mH{{\mathbb H}}
\def\mS{{\mathbb S}}
\begin{document}
\maketitle
\thispagestyle{empty}
\pagestyle{plain}

\begin{abstract}
This paper is concerned with open quantum systems whose dynamic variables satisfy canonical commutation relations and are governed by quantum stochastic differential equations. The latter are driven  by quantum Wiener processes which represent external boson fields. The system-field coupling operators are linear functions of the system variables. The Hamiltonian consists of a nominal quadratic function of the system variables and an uncertain perturbation which is represented in a Weyl quantization form. Assuming that the nominal linear quantum system is stable, we develop sufficient conditions on the perturbation of the Hamiltonian which guarantee robust mean square stability of the perturbed system. Examples are given to illustrate these results for a class of Hamiltonian perturbations in the form of trigonometric polynomials of the system variables.
\end{abstract}
\begin{keywords}%
open quantum stochastic system,
Hamiltonian perturbation,
Weyl quantization,
robust mean square stability.
\end{keywords}

\section{INTRODUCTION}

The quantum mechanical concept of quantization is concerned with assigning quantum observables to classical variables (and functions thereof). Weyl's proposal for the development of a general quantization  scheme was introduced  in 1927   (see for example, \cite[Section IV.14]{weyl1950theory})  soon after the invention of quantum mechanics. An important feature of the Weyl association is that it treats quantum dynamic variables equally and leads to correct marginal distributions for them \cite[Chapter 8]{dubin2000}. The Weyl quantization scheme employs Fourier transforms and is known to be a convenient and, in many respects, satisfactory procedure for  quantization \cite{dubin2000,F_1989,zachos2005}. In addition to providing a mathematical formalism, this scheme also offers an interpretation of quantum mechanical phenomena, thus leading to a better understanding of their physical aspects \cite{dubin2000,zachos2005}. The aim of the present paper is to use the Weyl quantization for the modeling of perturbations of  Hamiltonians for a class of open quantum systems and the robust stability analysis based on this description of uncertainty.

A wide range of open quantum systems, which interact with their environment, can be modelled by using the apparatus of quantum stochastic differential equations (QSDEs) \cite{HP_1984,PKR_2012}. In this framework, which follows the Heisenberg picture of quantum dynamics, a quantum noise is introduced in order to represent the surroundings as a heat bath of external fields acting on a boson Fock space \cite{PKR_2012}. The QSDE approach to open quantum systems is employed by the quantum dissipative systems theory \cite{J2010diss}  which addresses robust stability issues.


The robustness of various classes of perturbed open quantum systems, modelled by QSDEs, has been addressed in the literature using dissipativity theory and different notions of stability (see for example \cite{IVM2012,Ian2012robust,VP_2012a,VP_2012b}). In particular, robust mean square stability with respect to  a class of perturbations of  Hamiltonians has been studied in \cite{IVM2012} and its applications have been presented in \cite{Ian_ker_2013,Ian_Josephson_2012}. In these papers, the classical and quantum models of the perturbed Hamiltonian are related by using power series of quantum variables in combination with Wick's quantization \cite[pp. 445]{dubin2000}.

In the present paper, we consider a class of open quantum systems whose dynamic variables satisfy Heisenberg canonical commutation relations (CCRs) and are governed by QSDEs. The system-field coupling operators are assumed to be known linear functions of the system variables, while the Hamiltonian is split into a nominal quadratic part and an uncertain perturbation which is, in general, a non-quadratic function of the system variables. Following a similar approach in \cite{V_2014}, we use Weyl quantization in order to model the perturbation of  the system Hamiltonian.  The fact that the Weyl quantization employs Fourier transforms makes it particularly suitable for modelling uncertainties in the form of  trigonometric polynomials of the system variables, such as in \cite{Ian_Josephson_2012}. Assuming that the nominal linear quantum stochastic  system is stable, we develop sufficient conditions for the robust mean square stability of the perturbed system. These conditions employ a linear matrix inequality (LMI) which, in addition to the conventional Lyapunov part, involves a linear operator acting on matrices, whose structure is analogous to that of generalized  Sylvester equations (see, for example, \cite{GLAM_1992}). Such operators play an important role for moment stability of quasilinear quantum stochastic systems \cite{VP_2012b}.

The rest of the paper is organized as follows. Section~\ref{sec:system} describes the class of open quantum systems being considered. Section~\ref{sec:non_pertrb_H} models Hamiltonian perturbations in a Weyl quantization form. Sections~\ref{sec:quadro} and \ref{sec:diss} discuss the time evolution of weighted mean square functionals of the system variables and a related dissipation inequality.
Section~\ref{sec:stab} provides sufficient conditions of robust mean square stability for an admissible set of Hamiltonian perturbations. Section~\ref{sec:exinequ} discusses techniques for verifying these conditions in terms of the Weyl quantization model. Section~\ref{sec:exmpl} provides examples to demonstrate applicability of the approach. Section~\ref{sec:conclusion} summarizes the results of the paper.







\section{NOTATION}\label{sec:not}

Unless specified otherwise,  vectors are organized as columns, and the transpose $(\cdot)^{\rT}$ acts on matrices with operator-valued entries as if the latter were scalars. For a vector $X$ of operators $X_1, \ldots, X_r$ and a vector $Y$ of operators $Y_1, \ldots, Y_s$, the commutator matrix is defined as an $(r\x s)$-matrix
$
    [X,Y^{\rT}]
    :=
    XY^{\rT} - (YX^{\rT})^{\rT}
$
whose $(j,k)$th entry is the commutator
$
    [X_j,Y_k]
    :=
    X_jY_k - Y_kX_j
$ of the operators $X_j$ and $Y_k$.
Also, $(\cdot)^{\dagger}:= ((\cdot)^{\#})^{\rT}$ denotes the transpose of the entry-wise operator adjoint $(\cdot)^{\#}$. In application to complex matrices,  $(\cdot)^{\dagger}$ reduces to the complex conjugate transpose  $(\cdot)^*:= (\overline{(\cdot)})^{\rT}$. Furthermore, $\mS_r$, $\mA_r$
 and
$
    \mH_r
    :=
    \mS_r + i \mA_r
$ denote
the subspaces of real symmetric, real antisymmetric and complex Hermitian  matrices of order $r$, respectively, with $i:= \sqrt{-1}$ the imaginary unit. Also, $I_r$ denotes the identity matrix of order $r$, positive (semi-) definiteness of matrices is denoted by ($\succcurlyeq$) $\succ$, and $\ox$ is the tensor product of spaces or operators (for example, the Kronecker product of matrices).
The adjoints and  self-adjointness of linear operators acting on matrices is understood in the sense of the Frobenius inner product
$
    \bra M,N\ket
    :=
    \Tr(M^*N)
$ of real or complex matrices, with the corresponding Frobenius norm $\|M\|:= \sqrt{\bra M,M\ket}$ which reduces to the standard Euclidean norm $|\cdot|$ for vectors. Also, $\|v\|_K:= \sqrt{v^{\rT}Kv}$ denotes the Euclidean norm of a real vector $v$ associated with a real positive definite symmetric matrix $K$.
Finally, $\bE \xi := \Tr(\rho \xi)$ denotes the quantum expectation of a quantum variable $\xi$ (or a matrix of such variables) over a density operator $\rho$ which specifies the underlying quantum state. For matrices of quantum variables, the expectation is evaluated entry-wise.

\section{OPEN QUANTUM STOCHASTIC SYSTEMS}\label{sec:system}

We consider an open quantum stochastic system interacting with an external boson field.  The system has $n$ dynamic variables $X_1, \ldots, X_n$ which satisfy CCRs
\begin{equation}
\label{xCCR}
    [X, X^{\rT}] = 2i \Theta,
    \qquad
    X:=
    \begin{bmatrix}
        X_1\\
        \vdots\\
        X_n
    \end{bmatrix},
\end{equation}
where the CCR matrix $\Theta \in \mA_n$ is assumed to be non-singular. The system variables evolve in time according to a QSDE
\begin{equation}
\label{dx}
  \rd X = \Big(i[H,X] -\frac{1}{2} BJB^{\rT} \Theta^{-1} X\Big)\rd t+ B \rd W.
\end{equation}
Here, $W$ is an $m$-dimensional vector of quantum Wiener processes $W_1, \ldots, W_m$ with a  positive semi-definite It\^{o} matrix $\Omega \in \mH_m$:
\begin{equation}
\label{WW}
    \rd W \rd W^{\rT}
    =
    \Omega \rd t,
    \qquad
    \Omega = I_m + iJ,
\end{equation}
where $J \in \mA_m$. The matrix $B \in \mR^{n\x m}$ in (\ref{dx}) is related to a matrix $M \in \mR^{m\x n}$ of linear dependence of the system-field coupling operators on the system variables by
$
    B = 2\Theta M^{\rT}
$.
The term $-\frac{1}{2} BJB^{\rT} \Theta^{-1} X$ in the drift of the QSDE (\ref{dx}) is the Gorini-Kossakowski-Sudarshan-Lindblad (GKSL) decoherence superoperator \cite{GKS_1976,L_1976} which acts on the system variables and is associated with the system-field interaction. Also, $H$ is the Hamiltonian which describes the self-energy of the system and is usually represented as a function of the system variables. For what follows, we assume that $H$ is split into two parts:
\begin{equation}
\label{H}
    H = H_0 + H_1.
\end{equation}
Here,
\begin{equation}
\label{H0}
    H_0 := \frac{1}{2} X^{\rT} R X = \frac{1}{2}\sum_{j,k=1}^{n}r_{jk}X_jX_k
\end{equation}
is a quadratic function of the system variables with a real symmetric matrix $R:= (r_{jk})_{1\< j,k\< n}$ of order $n$, which corresponds to a nominal open quantum harmonic oscillator \cite{EB_2005,GZ_2004}. Also, $H_1$ is a self-adjoint operator on the underlying Hilbert space, which is interpreted as a perturbation of the Hamiltonian and is described in Section~\ref{sec:non_pertrb_H}. By substituting  (\ref{H}) and (\ref{H0}) into (\ref{dx}) and using the CCRs (\ref{xCCR}), it follows that the QSDE takes the form
\begin{equation}
\label{dx1}
  \rd X = (A X + Z)\rd t+ B \rd W,
  \qquad
  Z:= i[H_1, X],
\end{equation}
where $Z$ is an $n$-dimensional vector of self-adjoint operators, and $A \in \mR^{n\x n}$ is given by
\begin{equation}\label{A}
    A
    :=
    2\Theta R - \frac{1}{2} BJB^{\rT} \Theta^{-1} .
\end{equation}
It is assumed that the matrix $A$ is Hurwitz, and hence, the nominal open quantum harmonic oscillator is stable. In particular, in the absence of perturbations (that is, when $H_1=0$, and the system is governed by a linear QSDE $\rd X = AX\rd t + B\rd W$),  the system variables have finite steady-state moments of arbitrary order. Also, note that the CCRs (\ref{xCCR}) are  preserved in time for any perturbation of the system Hamiltonian. This is a consequence of the joint unitary evolution of the dynamic variables of the system and its environment, with the CCR preservation being part of the quantum physical realizability conditions \cite{JNP_2008}.

\section{PERTURBATION OF HAMILTONIAN}\label{sec:non_pertrb_H}

We will model the perturbation $H_1$ of the system Hamiltonian in (\ref{H}) by using  the Weyl quantization \cite{F_1989}  as follows:
\begin{equation}
\label{H1}
  H_1 := \int_{\mR^n} h(\lambda) \re^{i\lambda^{\rT}X}\rd \lambda.
\end{equation}
Here, $h: \mR^n \to \mC$ is a complex-valued function which satisfies $h(-\lambda) = \overline{h(\lambda)}$ for all $\lambda \in \mR^n$, thus ensuring that $H_1$ is a self-adjoint operator on the underlying Hilbert space. The function $h$ is computed as the standard Fourier transform
\begin{equation}
\label{h}
  h(\lambda) := (2\pi)^{-n}\int_{\mR^n} H_1(x)\re^{-i\lambda^{\rT} x}\rd x
\end{equation}
of a real-valued function $H_1: \mR^n \to \mR$ of $n$ classical variables whose quantization leads to (\ref{H1}).
The following lemma 
computes the perturbation term in the governing QSDE in the presence of perturbations.

\begin{lem}
\label{lem:Z}
The perturbation vector $Z$ in the drift of the QSDE (\ref{dx1}), which corresponds to (\ref{H1}), can be computed as
\begin{equation}
\label{iH1X}
    Z =
    2i
    \Theta
    \int_{\mR^n}
        h(\lambda)
        \lambda
            \re^{i\lambda^{\rT}X}
    \rd \lambda.
\end{equation}
\end{lem}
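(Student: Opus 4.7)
The plan is to reduce the vector commutator $Z = i[H_1,X]$ to an explicit expression by pushing the commutator through the integral defining $H_1$ in (\ref{H1}), and then to evaluate the resulting commutator $[\re^{i\lambda^{\rT}X},X]$ entry-wise using the CCRs (\ref{xCCR}). Formally, under the standing assumption that $h$ is integrable enough for the operator integral to make sense and to commute with $X_k$,
$$
    [H_1, X_k]
    =
    \int_{\mR^n} h(\lambda)\,[\re^{i\lambda^{\rT}X},\, X_k]\rd\lambda,
    \qquad k = 1,\ldots,n.
$$

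Next I would compute $[\re^{i\lambda^{\rT}X},X_k]$ for fixed $\lambda\in\mR^n$. Setting $A := i\lambda^{\rT}X$, the CCR (\ref{xCCR}) gives
$$
    [A, X_k]
    =
    i\sum_{j=1}^n \lambda_j[X_j, X_k]
    =
    i\sum_{j=1}^n \lambda_j\, 2i\Theta_{jk}
    =
    -2(\Theta^{\rT}\lambda)_k
    =
    2(\Theta\lambda)_k,
$$
where the last equality uses the antisymmetry $\Theta^{\rT}=-\Theta$. Crucially, $[A,X_k]$ is a scalar (a c-number), hence commutes with $A$. I would then invoke the standard identity $[\re^A,B] = \re^A[A,B]$, which holds whenever $[A,B]$ commutes with $A$ (this is proved by induction: $[A^n,B]=nA^{n-1}[A,B]$, followed by termwise summation of the power series for $\re^A$). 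This gives
$$
    [\re^{i\lambda^{\rT}X},\,X_k]
    =
    2(\Theta\lambda)_k\,\re^{i\lambda^{\rT}X}.
$$
Assembling the entries into a vector and factoring $\Theta$ out of the integral yields
$$
    [H_1, X]
    =
    2\Theta\int_{\mR^n}h(\lambda)\,\lambda\,\re^{i\lambda^{\rT}X}\rd\lambda,
$$
and multiplication by $i$ gives the claimed expression (\ref{iH1X}).

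The only delicate step is the identity $[\re^A,B]=\re^A[A,B]$; everything else is bookkeeping with CCRs and linearity of the integral. This step is purely algebraic once the c-number nature of $[A,X_k]$ is recognized, so the main real obstacle is analytic rather than algebraic, namely justifying the interchange of the operator-valued Bochner-type integral with the (generally unbounded) operators $X_k$. In the present setting this is handled by treating (\ref{H1}) as a definition on a suitable common dense invariant domain on which $X_k$ and the bounded Weyl operators $\re^{i\lambda^{\rT}X}$ act, and by assuming enough decay of $h$ (e.g.\ $\lambda\mapsto (1+|\lambda|)h(\lambda)\in L^1(\mR^n)$) to make the resulting integral in (\ref{iH1X}) absolutely convergent in the appropriate operator norm; such integrability will, in any case, be imposed in the robust stability conditions of later sections.
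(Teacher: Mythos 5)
Your proposal is correct and follows essentially the same route as the paper: both push the commutator through the integral in (\ref{H1}) and then evaluate $[\re^{i\lambda^{\rT}X},X]$ by exploiting the fact that $[i\lambda^{\rT}X,X_k]=2(\Theta\lambda)_k$ is a c-number, via the CCRs (\ref{xCCR}). The only cosmetic difference is that the paper packages this last step as Hadamard's lemma applied to $\re^{i\lambda^{\rT}X}X\re^{-i\lambda^{\rT}X}$ (with the series terminating after the first commutator), whereas you use the equivalent identity $[\re^{A},B]=\re^{A}[A,B]$ for central $[A,B]$; your added remarks on the domain and integrability issues go beyond what the paper addresses but do not change the argument.
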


\begin{proof}
By substituting (\ref{H1}) into the definition of $Z$ in (\ref{dx1}) and using the bilinearity of the commutator, it follows that
\begin{equation}
\label{iH1X0}
    Z = i\int_{\mR^n} h(\lambda) [\re^{i\lambda^{\rT}X}, X]\rd \lambda.
\end{equation}
Now,
\begin{align}
\nonumber
    [\re^{i\lambda^{\rT}X}, X]
    & = \re^{i\lambda^{\rT}X} X - X \re^{i\lambda^{\rT}X}\\
\label{expX}
    & =
    \big(\re^{i\lambda^{\rT}X} X \re^{-i\lambda^{\rT}X}  - X\big) \re^{i\lambda^{\rT}X}
    =2\Theta \lambda \re^{i\lambda^{\rT}X},
\end{align}
where use is made of the relation
\begin{align*}
    \re^{i\lambda^{\rT}X} X \re^{-i\lambda^{\rT}X}
    & =
    \re^{[i\lambda^{\rT}X, \cdot]}(X)
    =
    \sum_{k = 0}^{+\infty}
    \frac{1}{k!}
    [i\lambda^{\rT}X, \cdot]^k(X)\\
    & =
    X + [i\lambda^{\rT}X, X]
    =
    X - i[X,X^{\rT}]\lambda
     = X + 2\Theta \lambda
\end{align*}
which follows from Hadamard's lemma \cite{M_1998} and the CCRs (\ref{xCCR}). Here, $[\xi, \cdot]^k$ denotes the $k$-fold application of the commutator with a given operator $\xi$. Substitution of (\ref{expX}) into (\ref{iH1X0}) leads to (\ref{iH1X}).
\end{proof}

Note that the vector $Z$ in (\ref{iH1X}) can be regarded as the Weyl quantization of the $\mR^n$-valued function $x\mapsto  2\Theta \d_x H_1(x)$, where $\d_x(\cdot)$ denotes the gradient operator. Indeed, from (\ref{h}) it follows that $i h(\lambda)\lambda$ is the Fourier transform of the function $\d_xH_1$.

\section{WEIGHTED MEAN SQUARE FUNCTIONALS}\label{sec:quadro}

For what follows, consider a weighted mean square of the system variables:
\begin{equation}
\label{quadro}
    V:=
    \bE (X^{\rT} \Pi X) = \sum_{j,k=1}^{n} \pi_{jk} \bE(X_jX_k),
\end{equation}
where $\Pi:= (\pi_{jk})_{1\< j,k\< n}\in \mS_n$ is a given matrix. Here, the quantum expectation  $\bE \xi:= \Tr(\rho \xi)$ of a quantum variable $\xi$ is taken over a density operator $\rho$ with a product structure $\rho = \varpi \ox \upsilon$, where $\varpi$ is the initial quantum state of the system, and $\upsilon$ is the vacuum state of the external boson fields. We will now consider the time evolution of the quantity $V$.
\begin{lem}
\label{lem:Vdot}
For the perturbed quantum system governed by (\ref{dx1}), the quantity $V$ in (\ref{quadro}) satisfies a differential equation
\begin{align}
\nonumber
  \dot{V} =&    \Bra A^{\rT}\Pi + \Pi A , P\Ket + \Bra \Pi, BB^{\rT}\Ket\\
\label{Vdot}
   & + \Bra \Pi, \Re \bE(XZ^{\rT} + ZX^{\rT}) \Ket,
\end{align}
where
\begin{equation}
\label{P}
    P := \Re \bE(XX^{\rT})
\end{equation}
is the real part of the matrix of second moments of the system variables.
\end{lem}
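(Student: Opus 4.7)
The plan is to apply the quantum It\^{o} product rule to the operator-valued quantity $X^{\rT}\Pi X$, take the quantum expectation, and then collect terms using the symmetry of $\Pi$ together with the CCRs and the structure of the It\^{o} matrix $\Omega = I_m+iJ$.

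First, I would write the quantum It\^{o} differential
\begin{equation*}
    \rd(X^{\rT}\Pi X)
    =
    (\rd X)^{\rT}\Pi X + X^{\rT}\Pi\rd X + (\rd X)^{\rT}\Pi\rd X,
\end{equation*}
and substitute the QSDE (\ref{dx1}), $\rd X = (AX+Z)\rd t + B\rd W$. The diffusion contributions $X^{\rT}\Pi B\rd W$ and $(B\rd W)^{\rT}\Pi X$ have zero expectation in the product state $\rho = \varpi\ox\upsilon$ because $\upsilon$ is the vacuum state of the external field. The It\^{o} correction, via (\ref{WW}), gives
\begin{equation*}
    (\rd X)^{\rT}\Pi\rd X
    =
    \Tr\bigl(\Pi B\Omega B^{\rT}\bigr)\rd t
    =
    \bigl(\Bra\Pi, BB^{\rT}\Ket + i\Bra\Pi, BJB^{\rT}\Ket\bigr)\rd t,
\end{equation*}
and the second term vanishes because $\Pi\in\mS_n$ is symmetric while $BJB^{\rT}$ is antisymmetric. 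This produces the $\Bra\Pi,BB^{\rT}\Ket$ contribution in (\ref{Vdot}).

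Next, I would handle the drift. The quadratic piece yields $X^{\rT}(A^{\rT}\Pi + \Pi A)X$, and I would evaluate its expectation through the identity $\bE(XX^{\rT}) = P + i\Theta$, which follows by splitting $X_j X_k$ into its symmetric and antisymmetric parts and using the CCRs (\ref{xCCR}). Since $A^{\rT}\Pi+\Pi A$ is symmetric while $\Theta$ is antisymmetric, the imaginary contribution $i\Bra A^{\rT}\Pi+\Pi A,\Theta\Ket$ vanishes, leaving $\Bra A^{\rT}\Pi+\Pi A,P\Ket$. The remaining drift terms
$X^{\rT}\Pi Z + Z^{\rT}\Pi X$ form a self-adjoint operator (since each $X_j$ and each $Z_k$ is self-adjoint, so $(X^{\rT}\Pi Z)^{\dagger} = Z^{\rT}\Pi X$), so its expectation is real. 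Rewriting this expectation as $\Tr(\Pi\,\bE(XZ^{\rT}+ZX^{\rT}))$ and taking real parts using the reality of $\Pi$ produces $\Bra\Pi,\Re\bE(XZ^{\rT}+ZX^{\rT})\Ket$.

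The steps above are all essentially routine once the product rule is in place; the main point requiring care is the bookkeeping that ensures all spurious imaginary contributions — from the antisymmetric It\^{o} part $iJ$, from the commutator $2i\Theta$ in $\bE(XX^{\rT})$, and from the non-commutativity of $X$ and $Z$ — drop out against the symmetry of $\Pi$. Combining the three surviving contributions gives precisely (\ref{Vdot}).
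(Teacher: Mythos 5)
Your proof is correct and follows essentially the same route as the paper: the paper applies the quantum It\^{o} product rule to the matrix $XX^{\rT}$, averages to get $\dot{S} = AS+SA^{\rT}+B\Omega B^{\rT}+\bE(XZ^{\rT}+ZX^{\rT})$, and then contracts with $\Pi$, whereas you differentiate the scalar $X^{\rT}\Pi X$ directly — the same computation in a different order. All the key cancellations you track (the $iJ$ part of $\Omega$, the $i\Theta$ part of $\bE(XX^{\rT})$, and the imaginary part of the cross terms, each against the real symmetry of $\Pi$) match the paper's argument.
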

\begin{proof}
Consider a matrix $S\in \mH_n$ of second moments of the system variables given by
\begin{equation}
\label{S}
  S := \bE(XX^{\rT}) = P + i\Theta,
\end{equation}
where use is made of (\ref{xCCR}) and (\ref{P}). Since $\Pi$ in (\ref{quadro}) is a real symmetric matrix, then
\begin{equation}
\label{VP}
    V = \bra \Pi, S\ket = \bra \Pi, P\ket.
\end{equation}
By combining the quantum It\^{o} lemma and the It{\^ o} product rules with (\ref{dx1}), it follows that
\begin{align}
\nonumber
    \rd (XX^{\rT}) =& (\rd X) X^{\rT} + X\rd X^{\rT} + (\rd X) \rd X^{\rT}\\
\nonumber
    =& ((A X + Z)\rd t+ B \rd W) X^{\rT}\\
\nonumber
    &+ X((X^{\rT} A^{\rT} + Z^{\rT})\rd t+ \rd W^{\rT} B^{\rT})\\
\nonumber
    & + B\rd W \rd W^{\rT} B^{\rT}\\
\nonumber
    = &
        (AXX^{\rT} + XX^{\rT}A^{\rT} + B\Omega B^{\rT}+XZ^{\rT} + ZX^{\rT})\rd t\\
\label{dXX}
    & +
    B\rd W X^{\rT} + X \rd W^{\rT} B^{\rT},
\end{align}
where use is also made of (\ref{WW}).
Since the external boson fields are assumed to be in the vacuum state, the averaging of the QSDE (\ref{dXX}) leads to a differential equation for the matrix $S$ in (\ref{S}):
\begin{equation}
\label{Sdot}
    \dot{S}
    =
        AS + SA^{\rT} + B\Omega B^{\rT}
    + \bE(XZ^{\rT} + ZX^{\rT}).
\end{equation}
In view of (\ref{WW}), (\ref{S}) and (\ref{VP}), the substitution of (\ref{Sdot}) into $\dot{V} = \bra \Pi, \Re \dot{S}\ket $ and using the duality relation $\bra \Pi, AP + PA^{\rT}\ket = \bra A^{\rT}\Pi + \Pi A, P\ket $
leads to (\ref{Vdot}), thus completing the proof of the lemma.
\end{proof}

Note that the first line in (\ref{Vdot}) corresponds to linear dynamics of the nominal open quantum harmonic oscillator, while the second line comes from the perturbation of the system Hamiltonian.

\section{DISSIPATION INEQUALITY}\label{sec:diss}

Following \cite[Section VIII]{VP_2012a}, we say that a matrix $L := (L_{jk})_{1\< j,k\< r}$ of linear operators on the underlying Hilbert space,  satisfying $L^{\dagger} = L$, is \emph{superpositive} if the self-adjoint operator $u^* L u := \sum_{j,k=1}^r \overline{u_j}u_k L_{jk}$ is positive semi-definite for any vector $u:= (u_j)_{1\< j\< r} \in \mC^r$. This notion extends the standard positive semi-definiteness from a single operator to a matrix of operators and will be written using the same symbol as $L \succcurlyeq 0$. Note that the superpositiveness $L \succcurlyeq  0$ implies that $\bE L\succcurlyeq 0$ for any density operator over which this expectation is taken. Indeed, $L \succcurlyeq 0$ implies that $u^* \bE L u = \bE(u^* L u)\> 0$ for any $u \in \mC^r$. Similarly to the usual positive semi-definiteness for single operators, the superpositiveness induces a partial  ordering for matrices of operators. An example of a superpositive matrix is provided by $XX^{\rT}$ because $u^* XX^{\rT}u = (u^* X)(u^* X)^{\dagger}\succcurlyeq 0$ for any $u \in \mC^n$.
Now, consider an operator inequality
\begin{equation}
\label{ZZ}
    ZZ^{\rT} \preccurlyeq \mu_1 \sum_{k=1}^d \Gamma_k XX^{\rT} \Gamma_k^\rT + \mu_0 I_n
\end{equation}
for the vector $Z$ from (\ref{dx1})
in the sense of superpositiveness. Here, $\Gamma_1, \ldots, \Gamma_d \in \mR^{n\x n}$ are fixed matrices, $\mu_1$ and $\mu_0$ are real constants, and $d$ is a positive integer. The following lemma derives a useful upper bound from (\ref{ZZ}).

\begin{lem}
\label{lem:XZ}
Suppose the vector $Z$ satisfies (\ref{ZZ}), with $\mu_1 >0$. Then
\begin{equation}
\label{XZ}
    XZ^{\rT} + ZX^{\rT} \preccurlyeq \sum_{k=0}^d \Gamma_k XX^{\rT} \Gamma_k^\rT +\frac{\mu_0}{\mu_1}I_n,
    \qquad
    \Gamma_0 :=  \sqrt{\mu_1}I_n.
\end{equation}
\end{lem}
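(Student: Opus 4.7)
The plan is to derive (\ref{XZ}) in two short steps: first establish a Young-type inequality for the superpositive ordering of outer products of vectors of operators, then substitute the hypothesis (\ref{ZZ}).

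For the first step, I would consider, for arbitrary real scalars $\alpha,\beta$, the outer product $(\alpha X - \beta Z)(\alpha X - \beta Z)^{\rT}$. Exactly as in the example at the end of Section~\ref{sec:diss} showing that $XX^{\rT}$ is superpositive, for every $u \in \mC^n$ one has
\begin{equation*}
u^*(\alpha X - \beta Z)(\alpha X - \beta Z)^{\rT} u = \bigl(u^*(\alpha X - \beta Z)\bigr)\bigl(u^*(\alpha X - \beta Z)\bigr)^{\dagger} \succcurlyeq 0,
\end{equation*}
so the matrix $(\alpha X - \beta Z)(\alpha X - \beta Z)^{\rT}$ is superpositive. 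Since $X$ and $Z$ are both self-adjoint operator-valued, expanding this outer product and invoking linearity of the superpositive partial order yields
\begin{equation*}
\alpha\beta\,(XZ^{\rT} + ZX^{\rT}) \preccurlyeq \alpha^2 XX^{\rT} + \beta^2 ZZ^{\rT}.
\end{equation*}

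For the second step I would specialize to $\alpha := \sqrt{\mu_1}$ and $\beta := 1/\sqrt{\mu_1}$, which is legitimate since $\mu_1 > 0$. This gives
\begin{equation*}
XZ^{\rT} + ZX^{\rT} \preccurlyeq \mu_1 XX^{\rT} + \frac{1}{\mu_1} ZZ^{\rT},
\end{equation*}
and then substituting the assumed bound (\ref{ZZ}) on $ZZ^{\rT}$ cancels the $1/\mu_1$ against the $\mu_1$ multiplying the sum over $k$ and turns the identity term into $\mu_0/\mu_1\, I_n$. Finally, recognizing $\mu_1 XX^{\rT} = \Gamma_0 XX^{\rT} \Gamma_0^{\rT}$ with $\Gamma_0 := \sqrt{\mu_1}\,I_n$ lets me absorb this as the $k=0$ summand in $\sum_{k=0}^d \Gamma_k XX^{\rT} \Gamma_k^{\rT}$, yielding (\ref{XZ}).

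I do not anticipate a serious obstacle here; the only points to verify carefully are that the superpositive ordering is preserved by addition, by left-and-right multiplication by real matrices of the form $M(\cdot)M^{\rT}$, and by substitution (transitivity), all of which are immediate from the definition $L \succcurlyeq 0 \Leftrightarrow u^* L u \succcurlyeq 0$ for all $u \in \mC^n$. The choice $\alpha^2 = \mu_1$, $\beta^2 = 1/\mu_1$ is the unique (up to sign) balance that makes $\alpha\beta = 1$ while reshaping $\beta^2 \mu_1$ back to $1$ in front of the structured sum on the right-hand side of (\ref{ZZ}).
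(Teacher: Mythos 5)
Your proof is correct and follows essentially the same route as the paper: both arguments rest on the superpositiveness of the outer product $\bigl(\sqrt{\mu_1}X - \tfrac{1}{\sqrt{\mu_1}}Z\bigr)\bigl(\sqrt{\mu_1}X - \tfrac{1}{\sqrt{\mu_1}}Z\bigr)^{\rT}$ (you merely reach this via general $\alpha,\beta$ before specializing), followed by substitution of (\ref{ZZ}) and absorption of $\mu_1 XX^{\rT}$ as the $\Gamma_0$ summand.
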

\begin{proof}
Consider an auxiliary vector $Y$ of self-adjoint operators defined by
$
    Y := \sqrt{\mu_1} X - \frac{1}{\sqrt{\mu_1}} Z
$.
Since the matrix of operators
$
    YY^{\rT} = \mu_1 XX^{\rT} + \frac{1}{\mu_1} ZZ^{\rT} - XZ^{\rT}-ZX^{\rT}
$ is superpositive,
then
\begin{align*}
    XZ^{\rT}+ZX^{\rT} & \preccurlyeq \mu_1 XX^{\rT} + \frac{1}{\mu_1} ZZ^{\rT} \\
    & \preccurlyeq \mu_1 XX^{\rT} + \sum_{k=1}^{d} \Gamma_k XX^{\rT} \Gamma_k^\rT + \frac{\mu_0}{\mu_1}I_n.
\end{align*}
The last inequality follows from (\ref{ZZ}) and leads to (\ref{XZ}).
\end{proof}

We will now use Lemma~\ref{lem:XZ} in order to obtain a dissipation inequality for the mean square functional of the perturbed system.

\begin{lem}
Suppose the vector $Z$ in (\ref{dx1}) satisfies the operator inequality (\ref{ZZ}), with $\mu_1>0$. Also, let the weighting matrix $\Pi$ in (\ref{quadro}) be positive semi-definite and satisfy the following LMI
\begin{equation}
\label{Picond}
    A^{\rT} \Pi + \Pi A + \sum_{k=0}^d \Gamma_k^\rT \Pi \Gamma_k +\gamma \Pi\preccurlyeq  0,
\end{equation}
where $\gamma$ is a real constant, and $\Gamma_0$ is the matrix given by (\ref{XZ}). Then the quantity $V$ in (\ref{quadro}) satisfies a dissipation inequality
\begin{equation}
\label{Vdotineq}
  \dot{V} \<  -\gamma V + \Bra \Pi, BB^{\rT}\Ket  + \frac{\mu_0}{\mu_1}\Tr \Pi.
\end{equation}
\end{lem}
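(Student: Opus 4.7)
The plan is to chain together the expression for $\dot V$ from Lemma~2, the operator bound on $XZ^{\rT} + ZX^{\rT}$ from Lemma~3, and the LMI (\ref{Picond}), passing from superpositive operator inequalities to real matrix inequalities via quantum expectation and real parts.

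First I would write the identity (\ref{Vdot}) for $\dot V$ and focus on the last term $\Bra \Pi, \Re\bE(XZ^{\rT}+ZX^{\rT})\Ket$, since the first two terms already appear in the claimed bound. Applying Lemma~\ref{lem:XZ} (whose hypothesis $\mu_1>0$ is assumed) gives the superpositive inequality
\begin{equation*}
    XZ^{\rT}+ZX^{\rT}
    \preccurlyeq
    \sum_{k=0}^d \Gamma_k XX^{\rT}\Gamma_k^{\rT} + \frac{\mu_0}{\mu_1} I_n.
\end{equation*}
I would then take the quantum expectation of both sides (using the fact, noted just after the definition of superpositiveness, that $L\succcurlyeq 0\Rightarrow \bE L\succeq 0$), producing a Hermitian matrix inequality, and then take real parts. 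Since the difference of the two sides is Hermitian and positive semi-definite, its real part is a real symmetric positive semi-definite matrix (test with real vectors, noting that the antisymmetric imaginary part contributes zero), so the inequality survives. Using (\ref{P}) and the fact that the $\Gamma_k$ are real, this yields
\begin{equation*}
    \Re\bE(XZ^{\rT}+ZX^{\rT})
    \preccurlyeq
    \sum_{k=0}^d \Gamma_k P \Gamma_k^{\rT} + \frac{\mu_0}{\mu_1}I_n.
\end{equation*}

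Next I would pair this against $\Pi\succcurlyeq 0$ in the Frobenius inner product. Here I use the standard fact that for real symmetric $\Pi\succeq 0$ and symmetric $M\preceq N$ one has $\bra \Pi, M\ket\leq \bra \Pi,N\ket$, because $\bra \Pi,N-M\ket = \Tr(\Pi^{1/2}(N-M)\Pi^{1/2})\geq 0$. Combined with the duality relation $\bra \Pi,\Gamma_k P\Gamma_k^{\rT}\ket = \bra \Gamma_k^{\rT}\Pi\Gamma_k, P\ket$ used in Lemma~\ref{lem:Vdot}, substitution into (\ref{Vdot}) gives
\begin{equation*}
    \dot V
    \leq
    \Bra A^{\rT}\Pi+\Pi A + \sum_{k=0}^d\Gamma_k^{\rT}\Pi\Gamma_k,\, P\Ket
    + \Bra \Pi,BB^{\rT}\Ket
    + \frac{\mu_0}{\mu_1}\Tr\Pi.
\end{equation*}
Finally, the LMI (\ref{Picond}) bounds the bracketed matrix above by $-\gamma\Pi$, and since $P=\Re\bE(XX^{\rT})\succeq 0$ (the real part of a Hermitian positive semi-definite matrix, tested against real vectors), the same monotonicity of the Frobenius pairing against $P$ yields $-\gamma\bra \Pi,P\ket=-\gamma V$ by (\ref{VP}). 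This gives (\ref{Vdotineq}).

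The main obstacle, as far as there is one, is keeping track of the three different notions of positivity in play: superpositiveness for matrices of operators (used in Lemma~\ref{lem:XZ}), Hermitian positive semi-definiteness of complex-valued expectations, and real symmetric positive semi-definiteness once real parts are taken. The key observation that unlocks the argument is that each transition (expectation, then real part, then pairing with the real symmetric $\Pi\succeq 0$ and with $P\succeq 0$) preserves the ordering, so that the superpositive bound of Lemma~\ref{lem:XZ} can be combined cleanly with the scalar LMI (\ref{Picond}).
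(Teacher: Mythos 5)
Your proof is correct and follows essentially the same route as the paper's: identity (\ref{Vdot}), the superpositive bound of Lemma~\ref{lem:XZ} pushed through expectation and real parts, Frobenius-pairing monotonicity against $\Pi\succcurlyeq 0$ and $P\succcurlyeq 0$, and the LMI (\ref{Picond}); the only difference is that the paper applies the LMI to the Lyapunov term first and you apply it last, which is immaterial. Your extra care in tracking the three notions of positivity is a harmless (and welcome) elaboration of steps the paper leaves implicit.
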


\begin{proof}
From (\ref{Picond}) and the positive semi-definiteness of the  matrix $P$ in (\ref{S}), it follows that the first inner product in (\ref{Vdot}) admits an upper bound
\begin{equation}
\label{up1}
  \Bra A^{\rT}\Pi + \Pi A , P\Ket
  \<
  \Bra
    -\sum_{k=0}^d \Gamma_k^{\rT} \Pi \Gamma_k-\gamma \Pi, P\Ket.
\end{equation}
By applying Lemma~\ref{lem:XZ} and using the monotonicity of the quantum  expectation with respect to the superpositiveness, it follows from (\ref{XZ}) that
\begin{align}
\nonumber
    \bE(XZ^{\rT} + ZX^{\rT})
    & \preccurlyeq
    \sum_{k=0}^d
    \bE(\Gamma_k XX^{\rT} \Gamma_k^\rT) + \frac{\mu_0}{\mu_1}I_n,\\
\label{EXZ}
    & \preccurlyeq \sum_{k=0}^d \Gamma_k S \Gamma_k^\rT + \frac{\mu_0}{\mu_1}I_n,
\end{align}
Here, use is made of (\ref{S}) and the fact that $\Gamma_0, \ldots, \Gamma_d$ are constant matrices. Therefore, since $\Pi\succcurlyeq 0$, then (\ref{EXZ}) leads to the following upper bound for the last inner product in (\ref{Vdot}):
\begin{align}
\nonumber
    \Bra \Pi, \Re \bE(XZ^{\rT} + ZX^{\rT}) \Ket
    &\< \Bra \Pi, \sum_{k=0}^d \Gamma_k P \Gamma_k^\rT + \frac{\mu_0}{\mu_1}I_n \Ket \\
\label{up2}
    & \<  \Bra \sum_{k=0}^d \Gamma_k^\rT \Pi \Gamma_k,  P \Ket + \frac{\mu_0}{\mu_1}\Tr \Pi.
\end{align}
It now remains to note that substitution of (\ref{up1}) and (\ref{up2}) into (\ref{Vdot}) establishes (\ref{Vdotineq}).
\end{proof}

Note that, in addition to the usual Lyapunov part $A^{\rT} \Pi + \Pi A$ (which comes from the nominal linear system), the LMI (\ref{Picond}) involves a linear operator
\begin{equation}
\label{Syl}
    \Pi \mapsto \sum_{k=0}^d \Gamma_k^{\rT} \Pi \Gamma_k
\end{equation}
on the space $\mS_n$, which, in view of (\ref{ZZ}), is associated with the perturbation of the Hamiltonian. The structure of the linear operator in (\ref{Syl}) is analogous to that of the generalized Sylvester equations \cite{GLAM_1992}.  Such operators are present in moment stability conditions for quasilinear quantum stochastic systems \cite[Section IX]{VP_2012b}. Furthermore, this operator structure resembles the Kraus form of quantum operations  \cite[pp. 360--373]{NC_2000}.

\section{MEAN SQUARE STABILITY}\label{sec:stab}

The following lemma provides sufficient conditions of mean square stability \cite{IVM2012,Ian2012robust} of the perturbed quantum system (\ref{dx1}).
\begin{lem}
\label{lem:mss}
Suppose the vector $Z$ in (\ref{dx1}) satisfies the operator inequality (\ref{ZZ}) for some $\mu_1 > 0$, $\Gamma_1, \ldots, \Gamma_d \in \mR^{n\x n}$ and $\mu_0\in \mR$. Also, suppose there exist a weighting matrix $\Pi\succ 0$ and a constant $\gamma > 0$ which satisfy the LMI (\ref{Picond}). Then the quantum stochastic system (\ref{dx1}) is mean square stable, with the upper limit of the quantity $V$ in (\ref{quadro}) satisfying
\begin{equation}
\label{Vlim}
    \limsup_{t\to +\infty} V
    \<
    \frac{1}{\gamma}\Big(\Bra \Pi, BB^{\rT}\Ket  + \frac{\mu_0}{\mu_1}\Tr \Pi\Big).
\end{equation}
\end{lem}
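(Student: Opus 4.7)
The plan is to deduce the mean square stability bound from the dissipation inequality (\ref{Vdotineq}) by the standard comparison argument for scalar linear differential inequalities. Since the preceding lemma already delivers
$$
  \dot V \leqslant -\gamma V + c,
  \qquad
  c := \Bra \Pi, BB^{\rT}\Ket + \frac{\mu_0}{\mu_1}\Tr\Pi,
$$
under exactly the hypotheses assumed here, I would first invoke that lemma to obtain this inequality.

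Next I would apply the Gronwall/comparison principle: multiplying by the integrating factor $\re^{\gamma t}$ gives $\frac{\rd}{\rd t}(\re^{\gamma t} V) \leqslant c\,\re^{\gamma t}$, and integration from $0$ to $t$ yields the pointwise bound
$$
  V(t)
  \leqslant
  \re^{-\gamma t} V(0)
  + \frac{c}{\gamma}\bigl(1 - \re^{-\gamma t}\bigr).
$$
Because $\gamma > 0$, taking $\limsup$ as $t \to +\infty$ kills the exponentially decaying transient and produces the bound $\limsup_{t\to\infty} V \leqslant c/\gamma$ claimed in (\ref{Vlim}). The finiteness of $V(0) = \bE(X^{\rT}\Pi X)$ is implicitly assumed through the initial state $\varpi$ of the system (system variables with finite second moments), and the comparison principle remains valid for absolutely continuous $V$.

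For the mean square stability conclusion itself, I would note that $\Pi \succ 0$ implies $\lambda_{\min}(\Pi)\,\bE|X|^2 \leqslant V$, so the uniform bound on $\limsup V$ translates directly into a uniform bound on $\limsup_{t\to\infty}\bE|X|^2$, which is the definition of mean square stability used in \cite{IVM2012,Ian2012robust}.

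I do not anticipate a substantive obstacle here, since the work has already been carried out in the earlier dissipation lemma; the main step is simply to justify the passage from the differential inequality to an a priori bound. The only minor delicate point is ensuring that $V$ is differentiable (or at least absolutely continuous) in $t$ so that the integrating factor argument is legitimate — this follows from the differential equation (\ref{Sdot}) for the matrix $S$ of second moments, together with the identity $V = \bra \Pi, \Re S\ket$ established in the proof of Lemma~\ref{lem:Vdot}.
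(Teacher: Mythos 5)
Your proposal is correct and follows essentially the same route as the paper: invoking the dissipation inequality (\ref{Vdotineq}), applying the Gronwall--Bellman/integrating-factor argument to obtain $V(t)\< V(0)\re^{-\gamma t} + \frac{c}{\gamma}(1-\re^{-\gamma t})$, and then using $\bE(X^{\rT}X)\< V/\lambda_{\min}(\Pi)$ with $\Pi\succ 0$ to conclude mean square stability. Your added remarks on the absolute continuity of $V$ (via the ODE (\ref{Sdot}) for $S$) and the finiteness of $V(0)$ are sensible housekeeping that the paper leaves implicit.
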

\begin{proof}
The upper bound (\ref{Vlim}) follows from the Gronwall-Bellman inequality applied to (\ref{Vdotineq}):
$$
    V(t)\< V(0)\re^{-\gamma t} + \Big(\Bra \Pi, BB^{\rT}\Ket  + \frac{\mu_0}{\mu_1}\Tr \Pi\Big) \frac{1-\re^{-\gamma t}}{\gamma},
$$
which holds for all times $t\> 0$, where use is made of the integral $\int_0^t \re^{-\gamma s}\rd s = \frac{1-\re^{-\gamma t}}{\gamma}$. It now remains to note that (\ref{Vlim}) implies mean square stability of the quantum system being considered in view of the assumption that $\Pi\succ 0$, whereby
$
    \bE (X^{\rT}X)\< \frac{V}{\lambda_{\min}(\Pi)}
$,
with $\lambda_{\min}(\Pi)>0$ denoting the smallest eigenvalue of the weighting matrix.  \end{proof}

In combination with the assumption of stability for the nominal linear quantum system,
Lemma~\ref{lem:mss} leads to the following sufficient conditions on the perturbation Hamiltonians which guarantee mean square stability of the perturbed system.
\begin{thm}
	\label{thm:ps}
	Suppose the matrix $A$ in (\ref{A}) is Hurwitz. Then the following set is nonempty:
	\begin{align}
    \nonumber
		\cP :=
        \Big\{ &
        (\mu_1, \Gamma_1,...,\Gamma_d) \in \mR \x (\mR^{n\x n})^d:\ d>0,\ \mu_1 >0, \\
\label{cP}
            & (\ref{Picond})\ {\rm  is\ satisfied\ for\ some}\ \Pi \succ 0,\ \gamma>0
            \Big\}.
	\end{align}
Furthermore, the perturbed quantum system (\ref{dx1}) is mean square stable for any perturbation Hamiltonian $H_1$ from the uncertainty set
	\begin{align}
    \nonumber
		\cU := \Big\{
    				& H_1\ {\rm given\ by}\  (\ref{H1}):\ Z\ {\rm satisfies}\ (\ref{ZZ}) \\
    \label{cU}
			   	    & {\rm for\ some}\ (\mu_1,\Gamma_1,...,\Gamma_d) \in \cP, \ \mu_0 \in \mR
					   \Big\},					
	\end{align}
	where $Z$ is the vector of operators associated with $H_1$ by (\ref{dx1}).
\end{thm}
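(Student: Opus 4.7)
The plan is to prove the two claims of the theorem in order, exploiting the Hurwitz assumption on $A$ for the first claim and invoking Lemma~\ref{lem:mss} (plus the definitions of $\cP$ and $\cU$) for the second.

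For the nonemptiness of $\cP$, I would produce an explicit admissible tuple. Setting $d=1$ and $\Gamma_1 = 0$ collapses the operator in (\ref{Syl}) to the scalar shift $\Gamma_0^{\rT} \Pi \Gamma_0 = \mu_1 \Pi$, so the LMI (\ref{Picond}) reduces to the shifted Lyapunov inequality $A^{\rT}\Pi + \Pi A + (\mu_1+\gamma)\Pi \preccurlyeq 0$. Since $A$ is Hurwitz, its spectrum lies strictly in the open left half-plane, and by continuity of eigenvalues the shifted matrix $A + \tfrac{\alpha}{2} I_n$ remains Hurwitz for every sufficiently small $\alpha>0$. Choosing $\mu_1,\gamma>0$ with $\alpha := \mu_1+\gamma$ small enough, the classical Lyapunov equation $(A+\tfrac{\alpha}{2}I_n)^{\rT}\Pi + \Pi(A+\tfrac{\alpha}{2}I_n) = -I_n$ admits a unique $\Pi \succ 0$, which in fact satisfies the reduced LMI strictly. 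Hence $(\mu_1, 0) \in \cP$.

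For the second claim, let $H_1 \in \cU$ be arbitrary. By definition of $\cU$, the associated drift vector $Z$ from (\ref{dx1}) satisfies the superpositive operator inequality (\ref{ZZ}) for some tuple $(\mu_1, \Gamma_1, \ldots, \Gamma_d) \in \cP$ and some $\mu_0 \in \mR$; by the definition of $\cP$, there correspond $\Pi \succ 0$ and $\gamma > 0$ making (\ref{Picond}) hold. All premises of Lemma~\ref{lem:mss} are therefore in force, so mean square stability of the perturbed system, together with the asymptotic upper bound (\ref{Vlim}), follows at once.

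The argument is essentially a routine assembly of previously established results; the only step calling for thought is the nonemptiness argument, and even there the difficulty is mild since the nontrivial perturbation-structure terms $\Gamma_1,\ldots,\Gamma_d$ can be taken zero so that the LMI degenerates into a standard Lyapunov inequality for a slightly shifted Hurwitz matrix.
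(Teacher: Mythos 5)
Your proposal is correct and follows essentially the same route as the paper: nonemptiness of $\cP$ is reduced to a classical Lyapunov inequality using the Hurwitz property of $A$ (the paper argues by continuity with small $\mu_1$ and small $\Gamma_k$, while you make the concrete choice $\Gamma_1=0$ and a shifted Lyapunov equation, which is a valid special case since the definition of $\cP$ does not exclude zero matrices), and the stability claim is, as in the paper, an immediate application of Lemma~\ref{lem:mss}.
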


\begin{proof}
Since the matrix $A$ is Hurwitz, then there exist $\Pi\succ 0$ and $\gamma_0>0$ such that
$
    A^{\rT} \Pi + \Pi A + \gamma_0 \Pi \preccurlyeq 0
$.
Hence, by continuity, there exist (for any given positive integer $d$) sufficiently small $\mu_1>0$ and $\Gamma_1, \ldots, \Gamma_d \in \mR^{n\x n}$ such that the LMI (\ref{Picond}) is satisfied for the same $\Pi$ and a smaller $\gamma>0$  (that is, $0<\gamma < \gamma_0$). This proves that the set $\cP$ in (\ref{cP}) is indeed nonempty. Now, the property, that the perturbed system (\ref{dx1}) is mean square stable for any perturbation Hamiltonian $H_1$ belonging to the class $\cU$ in (\ref{cU}), was established in Lemma~\ref{lem:mss}.
\end{proof}

\section{TECHNIQUES FOR VERIFYING THE OPERATOR INEQUALITY}\label{sec:exinequ}

We will now discuss several techniques for verifying the operator inequality (\ref{ZZ}) which plays a central role for the uncertainty class $\cU$
in (\ref{cU}).
%
%
%
%
The following lemma reformulates this  operator inequality in terms of the Weyl quantization of the perturbation Hamiltonian.

\begin{lem}
\label{lem:p_ineq}
The operator inequality (\ref{ZZ}) for the vector $Z$ in (\ref{dx1}) is representable in the form
\begin{equation}
\label{ZZ_Weyl}
    4 \Theta
	\int_{\mR^n}
        \wt{h}(\lambda)
            \re^{i\lambda^{\rT}X}
    \rd \lambda
    \Theta
    \preccurlyeq \mu_1 \sum_{k=1}^d \Gamma_k XX^{\rT} \Gamma_k^\rT + \mu_0 I_n.
\end{equation}
Here, the function $\wt{h}: \mR^n \to \mC^{n\x n}$ is expressed in terms of the Fourier transform $h$ from (\ref{h}) as
\begin{equation}
\label{htilde}
\wt{h}(\lambda):=
      \int_{\mR^n}
        h(\lambda-\tau)h(\tau)
        (\lambda-\tau)
        \tau^\rT
	        \re^{i\tau^\rT \Theta \lambda}
    \rd \tau.
\end{equation}
\end{lem}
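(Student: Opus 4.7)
The plan is to show that the left-hand side $ZZ^{\rT}$ of (\ref{ZZ}) admits precisely the Weyl-quantization representation displayed on the left of (\ref{ZZ_Weyl}), after which (\ref{ZZ}) and (\ref{ZZ_Weyl}) become the same inequality. The starting point is the formula (\ref{iH1X}) for $Z$ from Lemma~\ref{lem:Z}, which gives
\[
    ZZ^{\rT}
    =
    -4\,
    \Theta
    \int_{\mR^n}\int_{\mR^n}
        h(\lambda)h(\tau)\,
        \lambda \tau^{\rT}\,
        \re^{i\lambda^{\rT}X}\re^{i\tau^{\rT}X}
    \rd\lambda\rd\tau\,
    \Theta^{\rT},
\]
where I have used that $e^{i\lambda^{\rT}X}$ is a scalar operator so the transpose acts only on the vector/matrix factors. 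Since $\Theta^{\rT}=-\Theta$ and the two imaginary units contribute a factor $-1$, the overall sign becomes $+4$ with $\Theta(\cdots)\Theta$ sandwiching the double integral.

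The key ingredient is the Weyl (canonical) form of the Baker--Campbell--Hausdorff identity: because $[i\lambda^{\rT}X, i\tau^{\rT}X] = -\lambda^{\rT}[X,X^{\rT}]\tau = -2i\lambda^{\rT}\Theta\tau$ is a scalar in view of the CCRs (\ref{xCCR}), one obtains
\[
    \re^{i\lambda^{\rT}X}\re^{i\tau^{\rT}X}
    =
    \re^{i(\lambda+\tau)^{\rT}X}\,\re^{-i\lambda^{\rT}\Theta\tau}.
\]
Substituting this into the double integral and carrying out the change of variables $\sigma:=\lambda+\tau$ (retaining $\tau$ as the inner variable, so $\lambda=\sigma-\tau$) converts the product of Fourier kernels into a convolution structure. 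Using antisymmetry of $\Theta$ to rewrite the surviving scalar phase as $-\sigma^{\rT}\Theta\tau = \tau^{\rT}\Theta\sigma$, the inner $\tau$-integral is exactly $\wt h(\sigma)$ as defined in (\ref{htilde}). Renaming $\sigma$ back to $\lambda$ yields
\[
    ZZ^{\rT}
    =
    4\,\Theta
    \int_{\mR^n}
        \wt h(\lambda)\,
        \re^{i\lambda^{\rT}X}
    \rd\lambda\,
    \Theta,
\]
which is the left-hand side of (\ref{ZZ_Weyl}). The right-hand side is identical to that of (\ref{ZZ}), so the two inequalities are equivalent, proving the lemma.

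The only real obstacle is the careful bookkeeping in the two manipulations above: tracking the antisymmetry of $\Theta$ when transposing $Z$, and applying the Weyl relation to justify that the quadratic phase term exponentiates cleanly (this uses that $\lambda^{\rT}[X,X^{\rT}]\tau$ commutes with everything, so the higher-order BCH terms vanish). The change of variables is then routine, and the identification of the inner integral with (\ref{htilde}) is by inspection.
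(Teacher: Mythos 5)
Your proposal is correct and follows essentially the same route as the paper's own proof: compute $ZZ^{\rT}$ from the formula (\ref{iH1X}) of Lemma~\ref{lem:Z}, apply the Baker--Campbell--Hausdorff relation for exponentials with a central commutator, and perform the convolution change of variables (using $\tau^{\rT}\Theta\tau=0$) to identify the inner integral with (\ref{htilde}). The only differences are cosmetic, such as writing the scalar phase as $\re^{-i\lambda^{\rT}\Theta\tau}$ rather than $\re^{i\tau^{\rT}\Theta\lambda}$, which agree by antisymmetry of $\Theta$.
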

\begin{proof}
From (\ref{iH1X}) and the antisymmetry of the CCR matrix $\Theta$, it follows that
\begin{align}
\nonumber
	ZZ^\rT&=
	4
    \Theta
	\int_{\mR^n}
        h(\lambda)
        \lambda
            \re^{i\lambda^{\rT}X}
    \rd \lambda
    \int_{\mR^n}
	h(\tau)
        \tau^\rT
            \re^{i\tau^{\rT}X}
        \rd \tau\,
                \Theta
    \\
\nonumber
    &=
    4\Theta
	\int_{\mR^{2n}}
        h(\lambda)h(\tau)
        \lambda
        \tau^\rT
            \re^{i\lambda^{\rT}X}
            \re^{i\tau^{\rT}X}
    \rd \lambda
    \rd \tau\,\Theta    \\
\nonumber
    &=
    4\Theta
	\int_{\mR^{2n}}
        h(\lambda)h(\tau)
        \lambda
        \tau^\rT
	        \re^{i\tau^\rT \Theta \lambda}
            \re^{i(\lambda+\tau)^{\rT}X}
    \rd \lambda
    \rd \tau\,
            \Theta\\
\label{ZZW}
    &= 	
    4\Theta \int_{\mR^n}
        \wt{h}(\lambda)
            \re^{i\lambda^{\rT}X}
    \rd \lambda\,
    \Theta,
\end{align}
where the function $\wt{h}$ is given by (\ref{htilde}).
Here, use is made of the Baker-Campbell-Hausdorff formula \cite[pp. 40]{M_1998} and the CCRs (\ref{xCCR}), whereby
$$
		    \re^{i\lambda^{\rT}X}
            \re^{i\tau^{\rT}X}
            =
            \re^{\frac{1}{2}[i\lambda^{\rT}X,i\tau^{\rT}X]}
            \re^{i(\lambda+\tau)^{\rT}X}
            =
            \re^{i\tau^\rT \Theta \lambda}
            \re^{i(\lambda+\tau)^{\rT}X},
$$
and the standard change of variables $(\lambda,\tau)\mapsto (\lambda-\tau,\tau)$ for convolution integrals. Substitution of (\ref{ZZW}) into (\ref{ZZ}) leads to (\ref{ZZ_Weyl}).
\end{proof}


The following lemma, which is given here for completeness,  extends the ordering of real-valued functions of a real variable to the case when they are evaluated at a self-adjoint operator.

\begin{lem}
    \label{lem:ext_ieq_f}
    Suppose $f,g: [a,b] \to \mR$ are continuous functions satisfying $f(z) \< g(z)$ in an interval $a\< z\< b$. Then $f(K) \preccurlyeq g(K)$ for any  self-adjoint operator $K$ whose spectrum is contained by this interval.

\end{lem}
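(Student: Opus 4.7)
The plan is to apply the continuous functional calculus for bounded self-adjoint operators. Since the spectrum $\sigma(K)$ is contained in the compact interval $[a,b]$, the operator $K$ is bounded and the spectral theorem supplies a projection-valued measure $E$ on $\sigma(K)$ with $K = \int z\, \rd E(z)$. This extends to a positive $*$-homomorphism $\varphi \mapsto \varphi(K) := \int \varphi(z)\, \rd E(z)$ from $C([a,b],\mR)$ into the algebra of bounded self-adjoint operators on the underlying Hilbert space, and it is this positivity that I intend to exploit.

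First I would introduce the difference $\varphi := g - f$, which is continuous on $[a,b]$ and non-negative by hypothesis. The central step is to argue that $\varphi(K) \succcurlyeq 0$. A clean way is to note that $\psi := \sqrt{\varphi}$ is well-defined and continuous on $[a,b]$, and the multiplicativity of the functional calculus gives $\varphi(K) = \psi(K)^2$; since $\psi$ is real-valued, $\psi(K)$ is self-adjoint, whence $\varphi(K) = \psi(K)^*\psi(K) \succcurlyeq 0$. Equivalently, for any vector $v$ in the underlying Hilbert space, the scalar representation
$$
    \bra v, \varphi(K) v\ket = \int \varphi(z)\, \rd \bra v, E(z) v\ket \> 0
$$
holds because the integrand is non-negative and $\bra v, E(\cdot) v\ket$ is a positive scalar measure.

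The conclusion $f(K) \preccurlyeq g(K)$ then follows by linearity of the calculus, since $g(K) - f(K) = \varphi(K) \succcurlyeq 0$. The only genuine obstacle is administrative, namely invoking the continuous functional calculus at the right level of generality; a self-contained alternative that bypasses explicit spectral measures is to approximate $\varphi$ uniformly on $[a,b]$ by polynomials $p_n$ via the Stone--Weierstrass theorem, verify the inequality at the polynomial level where $p_n(K)$ is unambiguous, and pass to the limit in operator norm using the estimate $\|\varphi(K) - p_n(K)\| \< \|\varphi - p_n\|_\infty$, which preserves positivity in the limit.
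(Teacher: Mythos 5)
Your proof is correct and takes essentially the same route as the paper's: both invoke the spectral theorem to represent $g(K)-f(K)$ as the integral of the non-negative continuous function $g-f$ against the projection-valued measure of $K$, and conclude positive semi-definiteness from the non-negativity of the integrand. The extra justifications you add (the square-root factorization $\varphi(K)=\psi(K)^{2}$ and the polynomial-approximation alternative) simply make explicit the positivity step that the paper's one-line argument takes for granted.
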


    \begin{proof}
        The spectral theorem (see, for example,  \cite[pp. 263]{reed1980}) implies that a self-adjoint operator $K$, described in the lemma, is representable as
        $
            K = \int_a^b z\nu(\rd z)
        $,
        where $\nu$ is a projection-valued measure on the interval $[a,b] \subset \mR$.  Therefore, since
        $f(z)-g(z) \< 0$ for all $z \in [a,b]$, then the operator
        $
           g(K)-f(K) = \int_a^b (g(z)-f(z)) \nu(\rd z)
        $
        is positive semi-definite, and hence, $f(K) \preccurlyeq g(K)$.
    \end{proof}

Note that Lemma~\ref{lem:ext_ieq_f} is useful for investigating the superpositiveness of a matrix of operators which are collinear to a real-valued function of a given self-adjoint operator. The following lemma studies a combined effect of several perturbations of the Hamiltonian on the operator inequality (\ref{ZZ}).
\begin{lem}
	\label{lem:decomp}
	Suppose the perturbation vector $Z$ in (\ref{dx1}) is decomposed as
	\begin{equation}
		\label{equ:decomp}
		Z = \sum_{k=1}^{d} c_k Z_{k},
	\end{equation}
    where $c_k$ are real coefficients and $Z_1, \ldots, Z_d$ are $n$-dimensional vectors of self-adjoint operators on the underlying Hilbert space. Also, suppose the operator inequalities
	\begin{equation}
		\label{ineq:decomp}
			Z_k Z_k^{\rT} \preccurlyeq \mu_{1} \Phi_k X X^\rT \Phi_k^\rT + \mu_{0k} I_n,
        \qquad
        k=1,\ldots,d,
	\end{equation}
are satisfied
    for some matrices $\Phi_k\in \mR^{n\x n}$ and real constants $\mu_{1} > 0$ and $\mu_{0k}$.
	Then the vector $Z$ in (\ref{equ:decomp}) satisfies (\ref{ZZ}) with the same constant $\mu_1$ and the following parameters $\Gamma_1, \ldots, \Gamma_d $ and $\mu_0$:
	\begin{equation}
		\label{equ:dec:Gamma_k_equ:dec:mu0}
    			 \Gamma_k  := \sqrt{\sigma_k}c_k\Phi_k,
                \qquad
		\mu_0 := \sum_{k=1}^{d} \sigma_k c_k^2 \mu_{0k},
	\end{equation}
where the coefficients $\sigma_k$ are computed in terms of arbitrary positive scalars $\nu_{jk}=\nu_{kj}$ as
    \begin{equation}
    \label{sig}
        \sigma_k := 1 + \sum_{j=1}^{k-1} \nu_{jk} + \sum_{j=k+1}^d \frac{1}{\nu_{jk}},
        \qquad
        k = 1,\ldots, d.
    \end{equation}
\end{lem}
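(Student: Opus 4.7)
The plan is to expand the outer product $ZZ^{\rT}$ via the decomposition (\ref{equ:decomp}), bound the cross terms by a superpositive analogue of Young's inequality, and then assemble the coefficients so as to apply the individual bounds (\ref{ineq:decomp}).

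First I would write
$$
    ZZ^{\rT}
    =
    \sum_{k=1}^{d} c_k^2 Z_k Z_k^{\rT}
    +
    \sum_{1\< j<k\< d} c_j c_k \bigl(Z_j Z_k^{\rT} + Z_k Z_j^{\rT}\bigr),
$$
grouping each off-diagonal pair $(j,k)$ and $(k,j)$ in the double sum. For the cross terms I would note that, for any two $n$-vectors $U,V$ of self-adjoint operators and any real $\alpha>0$, superpositiveness of $(\tfrac{1}{\sqrt\alpha} U - \sqrt\alpha\, V)(\tfrac{1}{\sqrt\alpha} U - \sqrt\alpha\, V)^{\rT}\succcurlyeq 0$ yields the bound
$$
    UV^{\rT}+VU^{\rT}\preccurlyeq \tfrac{1}{\alpha}UU^{\rT}+\alpha\, VV^{\rT}.
$$
Applying this to $U:=c_j Z_j$, $V:=c_k Z_k$ with $\alpha:=\nu_{jk}$ gives
$$
    c_j c_k \bigl(Z_j Z_k^{\rT}+Z_k Z_j^{\rT}\bigr)
    \preccurlyeq
    \tfrac{1}{\nu_{jk}} c_j^2 Z_j Z_j^{\rT}
    +
    \nu_{jk} c_k^2 Z_k Z_k^{\rT},
$$
valid irrespective of the sign of $c_j c_k$ since the scalar factors are absorbed into $U$ and $V$.

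Next I would sum these inequalities over $j<k$ and combine with the diagonal terms, collecting the coefficient of each $c_k^2 Z_k Z_k^{\rT}$: it receives $1$ from the diagonal, $\nu_{jk}$ from every pair $(j,k)$ with $j<k$, and $\tfrac{1}{\nu_{kj}} = \tfrac{1}{\nu_{jk}}$ from every pair $(k,j)$ with $k<j$. By the symmetry $\nu_{jk}=\nu_{kj}$, this coefficient is exactly the $\sigma_k$ of (\ref{sig}), so
$$
    ZZ^{\rT}
    \preccurlyeq
    \sum_{k=1}^{d} \sigma_k c_k^2 Z_k Z_k^{\rT}.
$$

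Finally, since $\sigma_k c_k^2\> 0$, the monotonicity of the superpositive ordering under multiplication by a nonnegative scalar lets me substitute (\ref{ineq:decomp}) into the right-hand side to obtain
$$
    ZZ^{\rT}
    \preccurlyeq
    \mu_1 \sum_{k=1}^d \sigma_k c_k^2 \Phi_k XX^{\rT}\Phi_k^{\rT}
    +
    \Bigl(\sum_{k=1}^d \sigma_k c_k^2 \mu_{0k}\Bigr) I_n,
$$
which, upon the substitutions (\ref{equ:dec:Gamma_k_equ:dec:mu0}), is precisely (\ref{ZZ}) with the claimed parameters. The only subtlety is the bookkeeping of the coefficients $\sigma_k$ via the symmetry of $\nu_{jk}$; apart from that, the argument is a straightforward superpositive Cauchy--Schwarz calculation.
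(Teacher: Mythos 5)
Your proposal is correct and follows essentially the same route as the paper's own proof: the same expansion of $ZZ^{\rT}$ into diagonal and cross terms, the same completion-of-the-square bound on the cross terms with weights $\nu_{jk}$, and the same collection of coefficients into $\sigma_k$ before substituting the individual bounds (\ref{ineq:decomp}). The bookkeeping of $\sigma_k$ via the symmetry $\nu_{jk}=\nu_{kj}$ matches (\ref{sig}) exactly.
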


\begin{proof}
	In view of (\ref{equ:decomp}),
	\begin{align}
		Z Z^\rT 
\label{ZZZ}
				&= \sum_{k=1}^{d} c_k^2 Z_k Z_k^\rT + \sum_{1 \< j < k \< d} c_jc_k(Z_j Z_k^\rT + Z_k Z_j^\rT ).
	\end{align}
By applying the completion-of-the-square technique, used in Lemma~\ref{lem:XZ}, to the vectors of operators $c_jZ_j$ and $c_kZ_k$, it follows that
	\begin{equation}		
    \label{ZZZZ}
		c_jc_k(Z_j Z_k^\rT + Z_k Z_j^\rT) \preccurlyeq \frac{c_j^2}{\nu_{jk}} Z_j Z_j^\rT + \nu_{jk} c_k^2  Z_k Z_k^\rT,
	\end{equation}
where $\nu_{kj}=\nu_{jk}$ are arbitrary positive scalars.  Substitution of (\ref{ZZZZ}) into (\ref{ZZZ}) and combining the result with (\ref{ineq:decomp}) leads to the following operator inequality (\ref{ZZ}) for the vector $Z$ in (\ref{equ:decomp}):
	\begin{align*}
		\nonumber
		Z Z^\rT &\preccurlyeq \sum_{k=1}^{d} c_k^2 Z_k Z_k^\rT + \sum_{1 \< j < k \< d} \Big(\frac{c_j^2}{\nu_{jk}} Z_j Z_j^\rT + \nu_{jk} c_k^2  Z_k Z_k^\rT\Big)\\
\label{ineq:decomp:intr}
    & =
    \sum_{k=1}^d \sigma_k c_k^2 Z_kZ_k^{\rT}
				\preccurlyeq \mu_1 \sum_{k=1}^d \Gamma_k X X^\rT \Gamma_k^\rT+ \mu_0 I_n,
	\end{align*}
	where use is made of the notation (\ref{equ:dec:Gamma_k_equ:dec:mu0}) and (\ref{sig}).
\end{proof}

\section{ILLUSTRATIVE EXAMPLES} \label{sec:exmpl}

The following examples aim to demonstrate an application of Theorem~\ref{thm:ps} to the robust mean square stability analysis of the open quantum system (\ref{dx1}) when the perturbation Hamiltonian is a trigonometric polynomial of the system variables.

\begin{exmp}
\label{exp:cosf}
Suppose $H_1:= \cos(\lambda_0^\rT X)$, where  $\lambda_0 \in \mR^n$ is a constant vector of spatial frequencies. This perturbation Hamiltonian results from the Weyl quantization (\ref{H1}) of the function $\cos(\lambda_0^\rT x)$ whose Fourier transform (\ref{h}) is given by $h(\lambda)= \frac{1}{2} (\delta(\lambda-\lambda_0) + \delta(\lambda+\lambda_0))$, with $\delta(\cdot)$ denoting the $n$-dimensional Dirac delta-function.  Substitution of $h$ into (\ref{iH1X}) yields $Z = -2\Theta \lambda_0 \sin(\lambda_0^{\rT}X)$, and hence,
\begin{equation}
\label{ZZsin}
    ZZ^{\rT} = -4 \Theta \lambda_0 \sin^2(\lambda_0^\rT X)\lambda_0^\rT \Theta,
\end{equation}
which can also be obtained by using Lemma~\ref{lem:p_ineq}. Here, use is also made of the antisymmetry of the CCR matrix $\Theta$.
Now, application of Lemma~\ref{lem:ext_ieq_f} leads to $\sin^2(\lambda_0^{\rT}X)\preccurlyeq (\lambda_0^{\rT}X)^2 = \lambda_0^{\rT} XX^{\rT}\lambda_0$, which, in combination with (\ref{ZZsin}), implies that
$$
	ZZ^\rT \preccurlyeq -4 \Theta \lambda_0 \lambda_0^{\rT} XX^{\rT}\lambda_0 \lambda_0^\rT \Theta
		   =  -4 \Theta \lambda_0 \lambda_0^\rT X X^\rT \lambda_0  \lambda_0^\rT \Theta.
$$
Therefore, the operator inequality (\ref{ZZ}) is satisfied for $d=1$ with $\Gamma_1 = \frac{2}{\sqrt{\mu_1}} \Theta \lambda_0 \lambda_0^\rT$ and $\mu_0=0$. According to Theorem~\ref{thm:ps}, the perturbation Hamiltonian $H_1$ being considered belongs to the uncertainty class $\cU$ with respect to which the system is robustly mean square stable, provided the following LMI
$$
    A^{\rT}\Pi + \Pi A + (\mu_1 + \gamma) \Pi + \frac{4}{\mu_1} \|\Theta \lambda_0\|_{\Pi}^2 \lambda_0  \lambda_0^{\rT}\preccurlyeq 0
$$
(obtained from (\ref{Picond})) holds for some $\Pi\succ 0$, $\mu_1>0$, $\gamma>0$.
\end{exmp}

\begin{exmp}
\label{exp:trigf}
Consider a perturbation Hamiltonian $H_1:= \sum_{k=1}^{d} r_k \cos(\lambda_k^\rT X+\phi_k)$, where $\lambda_k\in \mR^n$  are given vectors of spatial frequencies, $r_k> 0$ are amplitudes and $0\< \phi_k< 2\pi$ are initial phases which form complex amplitudes $a_k:= r_k \re^{i\phi_k}$. The corresponding Fourier transform in (\ref{h}) is   $h(\lambda)= \frac{1}{2} \sum_{k=1}^{d} (a_k \delta(\lambda-\lambda_k) + \overline{a_k} \delta(\lambda+\lambda_k))$. Then the perturbation vector $Z$ admits the decomposition (\ref{equ:decomp}) with unit coefficients $c_k = 1$, where, in view of Example~\ref{exp:cosf},
\begin{align*}
    Z_k &= i r_k  \Theta \lambda_k \big( \re^{i(\lambda_k^\rT X+\phi_k)}-\re^{-i(\lambda_k^\rT X+\phi_k)}\big)\\
        &= -2 r_k  \Theta \lambda_k \sin(\lambda_k^\rT X+\phi_k),
        \qquad
        k = 1,\ldots, d.
\end{align*}
Hence, by applying Lemma~\ref{lem:ext_ieq_f} twice, it follows that
\begin{align*}
    Z_k Z_k^\rT
                &= -4 r_k^2 \Theta \lambda_k \sin^2(\lambda_k^\rT X+\phi_k) \lambda_k^\rT \Theta \\
                &\preccurlyeq -4 r_k^2 \Theta \lambda_k (\lambda_k^\rT X+\phi_k)^2\lambda_k^\rT \Theta  \\
                &\preccurlyeq  -4 r_k^2 \Theta \lambda_k \Big((1+\omega_k) \lambda_k^{\rT}XX^{\rT}\lambda_k + \Big(1+\frac{1}{\omega_k}\Big)\phi_k^2\Big)\lambda_k^\rT \Theta,
\end{align*}
where $\omega_k$ are arbitrary positive real parameters.  Therefore, the inequalities (\ref{ineq:decomp}) are satisfied with the following parameters:
\begin{equation}
\label{Phimu}
	\Phi_{k}  \!:=\! 2r_k \sqrt{\frac{1+\omega_k}{\mu_1}}\, \Theta \lambda_k \lambda_k^\rT,\ \ \
	\mu_{0k}      \!:=\! 4 r_k^2\phi_k^2\frac{1+\omega_k}{\omega_k} |\Theta \lambda_k|^2.\!\!\!
\end{equation}
These can be employed in order to find the parameters $\Gamma_k$ and $\mu_0$ according to (\ref{equ:dec:Gamma_k_equ:dec:mu0}) and (\ref{sig}) of Lemma~\ref{lem:decomp}, and then proceed to the robust mean square stability analysis through the LMI (\ref{Picond}) as described in Theorem~\ref{thm:ps}.
%
\end{exmp}
\begin{exmp}
    Let  $E := H_1-\sum_{k=1}^d r_k \cos(\lambda_k^\rT X+\phi_k)$ be an error of approximation of the perturbation Hamiltonian by a trigonometric polynomial from Example~\ref{exp:trigf}. Suppose its contribution $i[E,X]$ to the perturbation vector $Z$ satisfies
    $$
       -[E,X][E,X]^{\rT} \preccurlyeq \mu_1 \Gamma XX^\rT \Gamma^\rT + \mu I_n,
    $$
    with $\Gamma \in \mR^{n\x n}$ and $\mu \in \mR$. A combination of Lemma~\ref{lem:decomp} with the results of Example~\ref{exp:trigf} leads to an augmented set of parameters which consists of  $\Phi_1, \ldots, \Phi_d$, $\mu_{01}, \ldots, \mu_{0d}$ from (\ref{Phimu}) and $\Phi_{d+1}:= \Gamma$, $\mu_{0,d+1}:= \mu$. The remaining part of the robust stability analysis procedure is carried out as before.
\end{exmp}

\section{CONCLUSION}\label{sec:conclusion}

In this paper, we have presented a novel model for perturbations of Hamiltonians in a Weyl quantization form for a class of open quantum stochastic systems with linear coupling to the external boson fields. The time evolution of weighted mean square functionals of the system variables  and a related dissipation inequality have been studied in order to develop sufficient conditions for robust  mean square stability. An admissible class of Hamiltonian perturbations of a given stable linear quantum system  has been formulated to guarantee stability of the resulting perturbed system. We have also discussed feasibility of these conditions in terms of the Weyl quantization model. This approach to the modelling and robust stability analysis of uncertain quantum stochastic systems has been demonstrated for several examples with Hamiltonian perturbations in the form of trigonometric polynomials of system variables.

\end{document}